\documentclass[10pt,journal,compsoc]{IEEEtran}

\ifCLASSOPTIONcompsoc
  \usepackage[nocompress]{cite}
\else
  \usepackage{cite}

\fi

\ifCLASSINFOpdf
  \usepackage[pdftex]{graphicx}
\else
\fi

\usepackage{amsmath}
\usepackage{amsthm}
\usepackage{bm}

\usepackage{algpseudocode}

\hyphenation{op-tical net-works semi-conduc-tor}

\usepackage{amssymb}
\usepackage{multirow}
\begin{document}
\title{Mitigating Cascading Effects in Large Adversarial Graph Environments}

\author{James D. Cunningham,~\IEEEmembership{}
        Conrad S. Tucker~\IEEEmembership{}
\IEEEcompsocitemizethanks{\IEEEcompsocthanksitem J. Cunningham is with the Department of Mechanical Engineering, Carnegie Mellon University, Pittsburgh, PA, 15213.\protect\\
E-mail: jamescun@andrew.cmu.edu
\IEEEcompsocthanksitem C. Tucker is with the Departments of Mechanical Engineering, Machine Learning, Robotics, and Biomedical Engineering, Pittsburgh, PA, 15213.}%
\thanks{}}

\markboth{}%
{Shell \MakeLowercase{\textit{et al.}}: Counterfactual Data Augmentation for Efficient Learning in Competitive Graph-based Environments}

\IEEEtitleabstractindextext{%
\begin{abstract}
A significant amount of society's infrastructure can be modeled using graph structures, from electric and communication grids, to traffic networks, to social networks. Each of these domains are also susceptible to the cascading spread of negative impacts, whether this be overloaded devices in the power grid or the reach of a social media post containing misinformation. The potential harm of a cascade is compounded when considering a malicious attack by an adversary that is intended to maximize the cascading impact. However, by exploiting knowledge of the cascading dynamics, targets with the largest cascading impact can be preemptively prioritized for defense, and the damage an adversary can inflict can be mitigated. While game theory provides tools for finding an optimal preemptive defense strategy, existing methods struggle to scale to the context of large graph environments because of the combinatorial explosion of possible actions that occurs when the attacker and defender can each choose multiple targets in the graph simultaneously. The proposed method enables a data-driven deep learning approach that uses multi-node representation learning and counterfactual data augmentation to generalize to the full combinatorial action space by training on a variety of small restricted subsets of the action space. We demonstrate through experiments that the proposed method is capable of identifying defense strategies that are less exploitable than SOTA methods for large graphs, while still being able to produce strategies near the Nash equilibrium for small-scale scenarios for which it can be computed. Moreover, the proposed method demonstrates superior prediction accuracy on a validation set of unseen cascades compared to other deep learning approaches.
\end{abstract}

\begin{IEEEkeywords}
Deep Learning, Network Theory, Game Theory, Graph Neural Networks
\end{IEEEkeywords}}

\maketitle

\IEEEdisplaynontitleabstractindextext

\IEEEpeerreviewmaketitle

\ifCLASSOPTIONcompsoc
\IEEEraisesectionheading{\section{Introduction}\label{sec:introduction}}
\else
\section{Introduction}
\label{sec:introduction}
\fi

\IEEEPARstart{C}{omplex} networks are often susceptible to large cascading impacts resulting from the failure of only a small number of their elements. If a small set of users share a social media post containing misinformation, it can spread to millions of people \cite{li2018influence}. Likewise, a small number of overloaded devices in the power grid can initiate a cascade that leaves thousands without power \cite{korkali2017reducing}. This potential for harmful effects to spread from a small initial subset of a network makes network environments an appealing target for malicious entities. An adversary will seek to maximize harmful impacts by simultaneously attacking multiple elements in the network. For example, in the electric power domain, a 2022 attack on two substations in North Carolina's electric grid were severely damaged by gunfire within a 10 minute interval, causing a cascade that left more than 45,000 people without power \cite{kerry2022northcarolina}. This combinatoric aspect of the attack creates a scalability problem when considering preemptive defense strategies, because as the size of the graph grows, the number of combinations of possible targets to prioritize grows exponentially for both the attacker and the defender.

\begin{figure}[h]
    \centering
    \includegraphics[width=\linewidth]{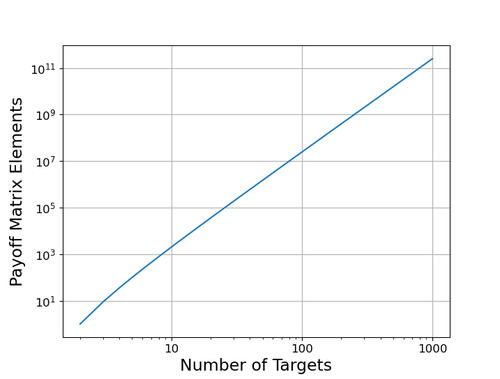}
    \caption{Size of payoff matrix for security game with $N$ targets and ${N}\choose{2}$ actions available to both the attacker and defender, plotted in logarithmic scale.}
    \label{fig:util_mat_size}
\end{figure}

While game-theoretic models provide solutions to the cascading impact defense problem in theory, in practice they have only been applied in a limited fashion due to the computational intractability of obtaining exact solutions for large networks with combinatorial actions \cite{wang2023attack}. As an illustrative example, in order to solve the exact Nash Equilibrium (NE) for a finite security game where the attacker and defender can each choose 2 out of $N$ targets, data for the cascading impact of every combination of possible attack and defense actions must be collected. This leads to a payoff matrix of size ${{N}\choose{2}}^2$, which grows exponentially as the number of targets increases, as shown in Figure \ref{fig:util_mat_size}. On top of this, in the general case, the complexity of solving for the NE is exponential with respect to size of the payoff matrix. This compounded exponentiality in complexity makes directly solving for the NE infeasible for all but the smallest networks when allowing combinatorial actions. As a result, a popular method for applying game theory to cascading impact defense is to restrict the number of possible targets to a subset of the elements in a graph, typically to less than $50$ combinatorial actions, for which directly solving the NE is still computationally viable \cite{paul2019learning,wei2016stochastic,guo2021reinforcement,yan2016q,ahmad2020maximizing}. A second State-of-the-Art (SOTA) approach is to reframe the possible actions from selecting individual targets to selecting from a set of predefined strategies, namely uniform random sampling and heuristic-based ranking \cite{li2019attack,chaoqi2021attack,wang2023attack}. By restricting the strategy space in this way, a separate NE can be computed that gives the optimal probabilities of each player selecting from these predefined strategies. This framing of the problem eliminates the exponential complexity that comes from large graphs and simultaneous attacks, as the number of actions is only proportional to the number of predefined strategies. However, this framing is much less expressive in terms of the possible strategies over all attack targets.%

Each of these two approaches reduce the complexity of the problem in order to accommodate traditional methods of solving for a NE. However, by employing a data-based approach to learn an approximate NE, the gap between these two SOTA approaches can be bridged, enabling solutions that are scalable to large networks and also fully expressive over the entire combinatorial action space. In this work, a Graph Neural Network (GNN) with multi-node embeddings is used to capture joint features between combinations of nodes and predict a cascading failure. This is combined with an action-restricting method in which many subsets of the full combiantorial action space are created. The GNN is then trained on a set of cascading impact outcomes in each of these sub-games and learns to generalize to the full combinatorial action space. We hypothesize that using training data that is structured as sub-games within the larger security game will improve generalization of the neural network compared to procuring the same amount of training data via sampling the entire action space.

However, while neural networks are able to generalize effectively when given enough data, collecting the volume of data they require is often a bottleneck. Real-world data for cascading impacts is typically very scarce, and even simulated data is often time consuming to produce depending on the fidelity of the simulation and the complexity of the cascade dynamics. In this work, Counterfactual Data Augmentation (CfDA) is also employed to overcome this bottleneck by generating additional data more efficiently than factual data can be obtained. CfDA leverages knowledge about the underlying dynamics of the domain to generate additional counterfactual training data from existing factual data that have been collected. It does so by combining aspects of different factual training examples, in this case nodes belonging to combinatorial actions, in new ways to create counterfactual data that is then efficiently validated as physically plausible. CfDA has shown success as a way to increase available training data for deep learning, even outside of combinatorial contexts \cite{pitis2020counterfactual,lu2020sample,buesing2018woulda}.

In order to make CfDA possible in the cascading impact domain, this work introduces algorithms for validating counterfactual data in the context of two generic models of cascade dynamics. For each of the cascading models covered, a counterfactual validation algorithm is presented that combines factual data collected from simulation in novel ways to create counterfactual data that can then be trained on by the neural network. Although there are many possible models of cascading impact, the two chosen for study in this work each address a broad class of cascade types and are generic to multiple application domains. In addition to works that use these cascading impact models as is, minor variations upon these models are ubiquitous in the cascading impact literature.

We show that the proposed method is able to generate close approximations to the NE strategies on 25 node graphs that are in the upper range of scale for which directly solving for the NE is feasible, as well as generate less exploitable strategies than the strategy-restricting methods on graphs as large as 1,000 nodes, for which directly solving for the NE is not feasible.

In summary, the main contributions of this work are as follows:
\begin{itemize}
    \item [$\bullet$] We introduce a method to learn arbitrary strategies over large graphs by partitioning the action space into subspaces during training and using GNNs with joint node embeddings.
    \item [$\bullet$] We introduce methods for the generation of counterfactual data from a factual set of data for two cascading impact models that are applicable to different sets of domains.
    \item [$\bullet$] We conduct extensive experiments that demonstrate the generalizability and scalability capabilities of the proposed method.
\end{itemize}

The rest of this paper is organized as follows; Section \ref{sec:lit_rev} reviews related literature on cascading impact modeling, graph environment security, and feature extraction from graphs. Section \ref{sec:prelim} details the game-theoretic formulation of the cascading impact security scenario, as well as the cascading impact models used in this work and the domains to which they are most relevant. Section \ref{sec:method} presents the algorithms for generating counterfactual data under each of the cascading impact models presented, as well as the details of the neural network architecture and training method. Section \ref{sec:results} details the experiments conducted to validate the proposed methods and provides a discussion on the results. In Section \ref{sec:conclusion}, the authors present conclusions drawn from this work and directions for future work.

\section{Related Work} \label{sec:lit_rev}
In this section, we review closely related work to the proposed method in graph feature extraction, graph environment security, and cascading impact models.

\begin{table*}[]
    \centering
    \caption{Comparison of the features of this work to those of the most closely related literature.}
    \begin{tabular}{ c|c|c|c|c|c|c}
      \textbf{Features} &  Guo et al. [1] & Chen et al. [2] & Kong et al. [3] & Wang et al. [4] & This Work
         \\
         \hline
      Large ($>100$ Targets) Networks &  & & \checkmark & \checkmark & \checkmark \\
      \hline
      Model Defensive Intervention & \checkmark & \checkmark & & \checkmark & \checkmark\\
      \hline
      Represent Arbitrary Strategy  & \checkmark & \checkmark &  &  & \checkmark\\
      \hline
      \end{tabular}
    \label{tab:litrev}
\end{table*}

\subsection{Graph Feature Extraction}
Graph Neural Networks (GNNs) have been widely used in tasks such as node classification, link prediction, and graph embedding on graph-structured data \cite{zhang2019graph}. K. Chen et al. \cite{chen2022gccad} developed GCCAD, an GNN designed for anomaly detection. This is a self-supervised framework that designs a graph corrupting strategy for generating synthetic node labels. This method being tuned for anomaly detection makes it an viable candidate for this problem, as nodes that a player may want to target are likely to be anomalous. B. Chen et al. \cite{chen2022distribution} introduce a method of graph embedding that removes the pooling of node features, which often reuslts in a critical loss of information, and instead models the distribution from which nodes are sampled. The proposed DKEPool outperforms state-of-the-art methods on graph classification tasks.

However, the typical way in which GNNs embed sets of multiple neighbors, by aggregating individual node embeddings, has severe limitations in that it cannot capture features that are dependent upon the joint set of nodes, such as the set of common neighbors between a set of nodes. This is problematic when interactions between specific nodes must be captured, as is the case with cascading impact and other domains such as gene co-expression networks and academic social networks \cite{xu2019network}. Thus, in this work we employ a multi-node embedding strategy similar to the \textit{labeling trick} proposed by Zhang et al. \cite{zhang2021labeling} to allow the neural network to capture these joint features between nodes.

\subsection{Security of Graph Environments}
Graph environment security is a complex and multi-faceted topic that has been studied from many different angles. In some contexts, such as the power grid, parts of the graph that have been disrupted by an attack can be recovered, and novel approaches for optimal failure recovery have been developed \cite{lu2018fast}. In other contexts, such as social media, methods have been developed to preserve user privacy from adversaries who seek to uncover private user data \cite{ding2019novel}. However, in this work, we limit ourselves to the preemptive defense of nodes in a graph environment, as this concept is nearly universal across application domains and particularly challenging in large graph environments. The preemptive defense problem is well-modeled using the game theoretic model of a security game, which is typically two-player zero-sum. In Section \ref{sec:introduction}, SOTA approaches for implementing the security game model in large graph environments were categorized into action-restricting methods and strategy-restricting methods.

Action-restricting methods identify a small subset of nodes or edges in the graph that are viable targets, effectively reducing the size of the security game. Wei et al. \cite{wei2016stochastic} use a secuirty game model in a power grid environment with $177$ targets and assume that the attacker and defender each choose two targets. The natural action space for this problem is ${177\choose2} \approx 15,000$, with with a payoff matrix size on the order of $10^8$. However, the authors employ action restriction to eliminate all but $40$ of these actions, and then, after simulating each of these actions, further reduce the action space to the $9$ that yielded positive utility to the attacker, for a payoff matrix of size $81$. Paul et al. \cite{paul2019learning} and Guo et al. \cite{guo2021reinforcement} both implement a Minimax Q learning approach to find the Nash equilibrium strategies for a graph with a total of 46 targets of which each player can select a single target at a time. Both of these works also manually prune the action space from $46$ for each player down to $10$ and $6$ respectively. Minimax Q learning is a well-known method for finding Nash equilibriums in two-player zero sum games that are too large to directly solve for the Nash equilibrium. Nonetheless, it still suffers from an inability to scale to large action spaces and, as these works demonstrate, requires manual pruning of the action space for large problems.

Strategy-restricting methods define a set of fixed strategies and reframe the actions in the security game as selecting from one of these predefined strategies as opposed to selecting a set of targets directly. Li et al. \cite{li2019attack} limit the strategy set to a uniform random strategy and a deterministic strategy that always chooses the nodes with highest degree, and calculate the corresponding NE. Chaoi et al. \cite{chaoqi2021attack}, use a similar approach with a different metric for the deterministic startegy, developing a customized cost function for the cascading dynamics. Wang et al. \cite{wang2023attack} use the Technique for Order Preference by Similarity to Ideal Solution (TOPSIS) criteria combined with the entropy weight method (EWM) to create their deterministic strategy. This method calculates an importance metric for each node by weighting four metrics of node centrality; namely degree centrality, closeness centrality, betweenness centrality, and eigenvalue centrality. EWM-TOPSIS is an algorithm that calculates the optimal weights for each of these centrality metrics to provide an accurate metric of the importance of the node relative to cascading impact. They also use uniform random node selection as the alternate strategy from which the NE is calculated.

Outside of the graph security domain, Deep RL (DRL) methods that leverage neural networks have demonstrated success in domains of multi-player games with very large state spaces, including the board games Go, Chess, and Shogie \cite{silver2016mastering,silver2017mastering}, as well as competitive video games such as DOTA2 \cite{berner2019dota} and StarCraft2 \cite{vinyals2019grandmaster}. These approaches use featurized state representations and neural networks to automatically learn to generalize knowledge about good action selections from states encountered during training to unseen states.

While action-restricting and strategy-restricting methods each allow for the security game model to be applied to large graph environments, they each have significant drawbacks either in scalability or expressiveness. Deep learning techniques for learning games have been able to scale to large state spaces while maintaining expressive policies, and this work uses a deep learning approach to achieve this feat in the large action spaces of the graph security problem.

\subsection{Models of Cascading Failure}
Cascading failure is a widely studied problem across a variety of domains. In this work, two models of cascading failure and their corresponding domains are focused on. The first is threshold-based methods of cascading. Under these models, a given node will fail if a certain threshold fraction of its neighbors fail. These methods have been applied to financial contagion as in Gai et al. \cite{gai2010contagion}, where an interconnected network of banks can have a chain reaction of failures if a bank fails that many other banks hold assets of. In the social network domain, influence maximization has been studied to maximize social influence by choosing a set of seed $k$ users in a social network. Influence maximization has received much attention due to its applicability to viral marketing, misinformation spread, and social recommendation \cite{li2018influence}. Borodin et al. \cite{borodin2010threshold} studied threshold models in the competitive influence of social networks.

The next cascading impact model studied in this work is the shortest path cascade that has been widely used in the transit domain. For instance Huange et al. \cite{huang2021using} used disaster spreading theory to analyze the cascading impact of an urban rail transit network, and \cite{ghena2014green} looked at similar cascading effects that can occur in networks of roads when a single road closure occurs to due an attack on traffic lights.

While are many different models of cascading failure, the two studied in this work were chosen for their generalizability to multiple domains given small variations. This choice was made so that the proposed CfDA generation algorithms could have as broad an applicability as possible to multiple application domains.

\section{Preliminaries} \label{sec:prelim}
In this section, important background information is provided for the analysis of this paper. Let us define a graph $G = (V, E)$, where $V = {v_1, v_2, \ldots, v_N}$ represents the set of $N$ nodes, and $E \subseteq V \times V$ represents the set of edges. Each node $v_i$ has an associated feature vector $\mathbf{x}_i \in \mathbb{R}^F$ that describes the attributes of the node.

Let us define a cascading failure model as $\Psi = C(\Theta)$, where $\Theta$ is the initial set of failed nodes due to an attack, and $\Psi$ is the set of failed nodes that failure has cascaded to after the system has reached steady state. The full set of failed nodes is then given by $\Omega = \Psi \cup \Theta$.

\subsection{Security Game Model}
We model the security scenario of preemptively defending a graph environment from an attacker who seeks to cause a cascading failure as a normal form security game, which has often been used to model the defense of infrastructure from adversaries \cite{chen2022game}. The game is defined by a finite set of actions available to both of the players and their corresponding payoffs. The game is zero-sum, $p_a = -p_d$, where $p_a$ is the payoff for the attacker and $p_d$ is the payoff for the defender.

In a normal-form security game, the defender and the attacker may employ mixed strategies, where they choose actions probabilistically. For an action space $A$ of size $|A|$, let $\mathbf{\pi}_d \in \mathbb{R}^{|A|}$ denote the probability that the defender will choose each action in $A$, and likewise $\mathbf{\pi}_a$ for the attacker. We denote the expected payoff with respect to a mixed strategy profile $(\mathbf{\pi}_a, \mathbf{\pi}_d)$ as $\mathbf{P}(\mathbf{\pi}_a,\mathbf{\pi}_d) = \mathbb{E}_{\mathbf{\pi}_a,\mathbf{\pi}_d} [p_a] = -\mathbb{E}_{\mathbf{\pi}_a,\mathbf{\pi}_d} [p_d]$

In the context of the combinatoric cascading failure problem, the set of possible actions for each player is a combination of targets to attack or defend. In this work, we limit ourselves to targeting $2$ nodes at a time, as this alone introduces scalability and strategy representation issues for large graphs. Thus, the action space for each player has size ${N\choose2}$ and each action ($\alpha_a$ for the attacker or $\alpha_d$ for the defender) represents a pair of nodes that the player selects. The total number of actions considering both players is clearly ${N\choose2}^2$, since every possible action for the attacker can be paired with every possible action for the defender.

The failure model used to derive the set of initial failures from the actions selected is given by:
\begin{equation}
    \Theta = \alpha_a - (\alpha_a \cap \alpha_d)
\end{equation}

Under this failure model, any nodes belonging to the attack action set fail as long as they are not also present in the defense action set. Once the final set of nodes $\Omega$ is computed from $\Theta$ via the cascading failure model, the payoff for each player is given by:

\begin{equation}
    p_a = -p_d = \frac{|\Omega|}{N}
\end{equation}
A strategy $\mathbf{b}_\pi$ is a best response to the strategy $\pi$ if and only if:
\begin{equation}
    \mathbf{b}_\pi = \max_{\pi'} \mathbf{P}(\pi',\pi)
\end{equation}
The expected payoff of the best response, $\mathbf{P}(\mathbf{b}_\pi,\pi)$, is referred to as the \textit{exploitability} of strategy $\pi$.

A Nash equilibrium in a normal form security game is a mixed strategy profile where both players are playing a best response to their opponents' strategy, and thus neither player has an incentive to unilaterally deviate from their chosen strategy. For a zero-sum attacker-defender security game, a Nash equilibrium is defined as:

\begin{equation}
\pi_a^* = \max_{\mathbf{\pi}_a} \min_{\mathbf{\pi}_d} \mathbf{P}(\mathbf{\pi}_a,\mathbf{\pi}_d)
\end{equation}

\begin{equation}
\pi_d^* = \min_{\mathbf{\pi}_d} \max_{\mathbf{\pi}_a} \mathbf{P}(\mathbf{\pi}_a,\mathbf{\pi}_d),
\end{equation}

\subsection{Cascading Failure}\label{sec:prelim_cascfailure}

In this study, we focus on simple cascading failure models that have been used in multiple domains. Among the numerous models available, we have specifically chosen two models for detailed analysis in this work. These models serve as representative examples of broad classes of cascade types that have been widely studied in the literature. Therefore, the selected models provide a solid foundation for demonstrating the generalizability of our method to multiple domains of graph security. In this section, we detail the properties of each of these two cascading failure models.

In general, a cascade consists of an initial set of failed elements that lead to a new set of failed elements that themselves lead to another set of failed elements, and so on until a steady state is achieved. Let us define $\Psi_t$ as the set of failed nodes caused by the $t$-th round of cascading such that $\Psi = \bigcup_{t=1}^T \Psi_t$, where $T$ is the number of rounds of cascading failure after which steady state is reached. Likewise, we define $\Omega^t$ as the total set of failed nodes as of the $t$-th round of cascading, with initial condition $\Omega_0 = \Theta$, such that $\Omega = \bigcup_{t=0}^T \Omega_t$. We can then define the function for a single-round of cascading as $\Psi_{t+1} = c(\Omega_t)$, of which the function $C(\cdot)$ is composed as follows:
\begin{equation}
    \Psi = C(\Theta) = \bigcup_{t=0}^{T-1} c(\Omega_t)
\end{equation}

Next, we detail the specific properties of the single-round cascade function $c(\cdot)$ for each of the cascading failure models studied in this work.

\subsubsection{Threshold-based Cascading Failure}\label{sec:prelim_threshcasc}
We define a set of thresholds such that each node $v \in V$ has a corresponding threshold $\phi_v \in (0,1]$. These thresholds can be assigned to nodes using domain knowledge or empirical data. To create simulated data, thresholds can be generated according to some probabilistic distribution. The threshold-based cascading model is defined by the following single-round cascade function:
\begin{equation}\label{eq:threshcasc}
 c(\Omega_t) =  \forall v \notin \Omega_t \quad \text{such that} \quad \frac{|\mathcal{N}(v) \cap \Omega_t| }{|\mathcal{N}(v)|} \geq \phi_v
\end{equation}
where $\mathcal{N}(v)$ denotes the set of neighbors of node $v$.

\begin{figure}[h]
    \centering
    \includegraphics[width=0.75 \linewidth]{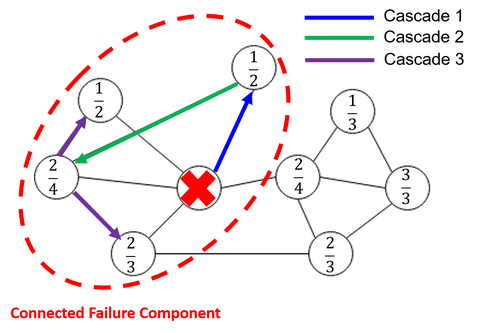}
    \caption{An example of a cascade in the threshold-based cascading model. The initial failed node is marked with a red "X", and this failure causes Cascade 1 which is marked with a causal arrow. As a result of Cascade 1, Cascade 2 occurs, and so on with Cascade 3.}
    \label{fig:threshexample}
\end{figure}

Figure \ref{fig:threshexample} shows an illustrative example of a cascade under the threshold-based model. Cascade 1 is triggered by the initial failure, as one of its neighbors has a threshold of 1/2 and only 2 neighbors, thus meeting the criteria for a cascading failure. Once both of these nodes have failed after Cascade 1, this triggers another node to fail in Cascade 2 as it has a threshold of 2/4 and is neighbors with both failed nodes. Cascade 3 follows from Cascade 2, with two additional nodes meeting their thresholds for cascading failure from the set of failed nodes in Cascades 1 and 2.

An important consequence of (\ref{eq:threshcasc}) and the constraint of $\phi_j > 0$ is that all nodes in $\Psi_{t+1}$ must have at least one neighbor in $\Omega_t$. Because $\Omega_0 = \Theta$, all subsequent cascades will contribute to growing connected components of failed nodes originating from one or more of the initial failures as can be seen in the example of Figure \ref{fig:threshexample}. Variations in threshold-based models have been used to model cascading failure in social and financial networks \cite{borodin2010threshold,gai2010contagion}.

\subsubsection{Shortest-Path Cascading Model}
In this cascading failure model, an exchange of some quantity (e.g., traffic, energy, information) is assumed between all pairs of nodes along the shortest path connecting two nodes. Each node $v \in V$ carries a load $l_v$ that is the number of shortest paths between all nodes that pass through $v$, not including the trivial shortest paths for which $v$ is an endpoint (without loss of generality). Each node also has a capacity $k_v$, which similar to $\phi_v$ in the threshold model is assigned based on either empirical data, or according to a probabilistic distribution with the constraint that $k_v > \lambda_v$, where $\lambda_v$ is the load on node $v$ under the initial conditions of the graph with no node failures. When a node $v$ fails, its load will be distributed to other nodes according to the new shortest paths that are created in the absence of $v$ and other nodes that may have failed. Given the failure set $\Omega_t$, we define the function to calculate the load on a particular node $v$ as $l_v^t = \ell_v(\Omega_t)$. The single-round cascading failure function for shortest-path cascading is as follows:

\begin{equation}\label{eq:shortestpath_casc}
    c(\Omega_t) = \forall v \notin \Omega_t \quad \text{such that} \quad l_v^t > k_v
\end{equation}

Figure \ref{fig:shortpath_example} shows an illustrative example of this cascading failure model. The node that fails initially has a load of $6$, as it is part $6$ shortest paths between other nodes. When this node fails, these shortest paths are routed through other nodes that gain $6$ load as a result.

\begin{figure}[h]
    \centering
    \includegraphics[width=0.75\linewidth]{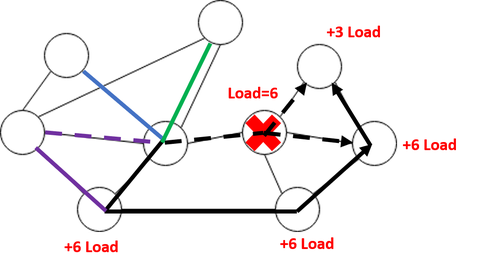}
    \caption{An example of load redistribution according to shortest-path cascade rules. The red ``X" marks the initial failed node, and the dashed lines represent the shortest paths between nodes that traveled through this node before it failed. The solid lines represent the redirected shortest paths after the node failed. The black lines indicate that multiple paths are following the same route. Also indicated is all the nodes that gain load as a result of the initial failure, and how much load is gained.}
    \label{fig:shortpath_example}
\end{figure}

Variations in shortest-path models have been used to model cascading failure in traffic and network routing applications \cite{zeng2022transportation}.

\section{Method}\label{sec:method}

In this section, the proposed method is detailed, including the generation of the factual training data, the generation of the counterfactual training data, and the model architecture used to learn to predict cascading failure outcomes.

\subsection{Factual Data Generation}\label{sec:fac_gen}
In order to generate the training dataset, the action space $A$ of size $|A| = {N\choose2}$ is partitioned into $p$ subaction spaces $\{\ddot{A}_i\}_0^{p-1}$each containing $M$ nodes and ${M\choose2}$ actions. Then, $q$ trials are performed for each subaction space, where a trial consists of simulating the cascading failure for a particular choice of actions of the attacker and defender that belong to the subaction space. When $M$ is small, $q$ can be chosen to be ${M\choose2}^2$, so that all possible actions can be exhausted in the subaction space. Figure \ref{fig:facdata_method} shows an illustration of the process of generating factual data. %

\begin{figure}[h]
    \centering
    \includegraphics[width=1.02 \linewidth]{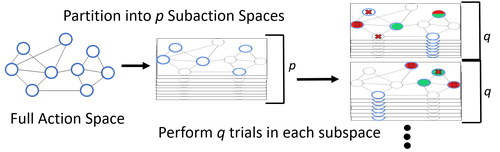}
    \caption{Illustration of the factual data generation process. The full action space with all nodes available to both players is broken down into $p$ subaction spaces with limited sets of nodes available to target (outlined in blue). In each of the $q$ trials, the node choices by the attacker (red) and defender (green) are recorded as well as the nodes that belong to $\Omega$ for the trial (marked with red ``X").}
    \label{fig:facdata_method}
\end{figure}

After these trials have been performed, the result is a dataset $\mathcal{D}(\boldsymbol{\alpha}_a,\boldsymbol{\alpha}_d,\mathbf{\Omega})$, where $\boldsymbol{\alpha}_a \in \mathbb{R}^{pq \times 2}$ is the set of all node pairs chosen by the attacker for all $q$ trials in all $p$ subaction spaces, $\boldsymbol{\alpha}_d \in \mathbb{R}^{kp \times 2}$ is the same for the defender, and $\mathbf{\Omega} \in \mathbb{R}^{kp \times N}$ is a multi-hot encoding of the final set of failed nodes after cascading failure has completed given the corresponding attack and defense actions, where a value of $1$ indicates that that particular node in the network failed as a result of cascade in that trial.

\subsection{Counterfactual Data Generation} \label{sec:cf_gen}
CfDA complements the proposed action space decomposition by combining node selections by both players made in different subaction spaces and their corresponding cascades to create novel scenarios that are not represented in $\mathcal{D}$. In this way, CfDA adds data that crosses over between the subaction spaces. However, because these are novel scenarios with respect to the factual data, the combined counterfactual cascades must be validated as physically plausible given the counterfactual action selections. Figure \ref{fig:CfacMethod} shows an example of the proposed CfDA process.

To validate the combination of actions from different trials, it must be possible to isolate the contribution of a specific attacked node $v \in \Theta$ to the final cascade $\Omega$. This attribution process is specific to each cascading failure model, and trials for which it is not possible are discarded for CfDA. Let us denote the contribution to $\Omega$ of the initial failure of node $v \in \Theta$ as $\Omega_{v}$. Note that $\Omega_v \subseteq \Omega$.

A counterfactual trial will begin with selecting nodes $v \in \Theta_1$ and $w \in \Theta_2$ from two factual trials from different subaction spaces, to create the counterfactual initial failure set $\hat{\Theta} = \{ v,w\}$. Then, the counterfactual final cascade is constructed as $\hat{\Omega} = \Omega_v \cup \Omega_w$. $\hat{\Omega}$ is considered a valid counterfactual scenario under the following condition:
\begin{equation}\label{eq:val_criterion}
    c(\hat{\Omega}) = \emptyset
\end{equation}
Note that this condition is dependent on the cascading failure dynamics, and requires that $\hat{\Omega}$ is at steady-state. In order for CfDA to be beneficial, it is important that the condition (\ref{eq:val_criterion}) can be evaluated more efficiently than evaluating the factual outcome of $\hat{\Omega} = C(\hat{\Theta})$. This section presents novel methods to perform this efficient validation of counterfactual data for multiple models of cascading failure.

\begin{figure}
    \centering
    \includegraphics[width=1.38\linewidth]{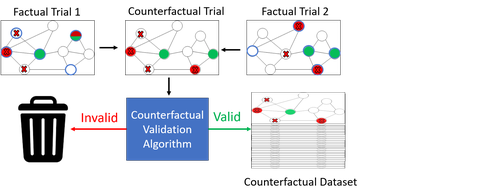}
    \caption{Illustration of the general counterfactual generation process. A counterfactual action is created by mixing attack and defense actions from two different factual trials that come from different subaction spaces. All failures associated with the attack actions are also incorporated into the counterfactual trial. Then, this counterfactual example is evaluated on its realizability under the specified cascading dynamics and added to the counterfactual dataset if it is feasible.}
    \label{fig:CfacMethod}
\end{figure}

\subsubsection{Threshold-based Cascading Model}

Recall the property of threshold-based cascading in Section \ref{sec:prelim_threshcasc} which states that $\Omega$ can be represented as a set of connected components containing one or more initial failures each. Because of this property, the attribution of $\Omega_v$ can be done by identifying the connected component $\Gamma_v \in \Omega$ such that $v \in \Gamma_v$. If there are two nodes in $\Theta$ but only a single connected component, then this indicates that the cascades originating from these two initial failures have merged, and this trial is discarded for the purposes of CfDA. Otherwise, we take $\hat{\Omega} = \Gamma_v \cup \Gamma_w$, where $v \in \Theta_1, w \in \Theta_2$.

\newtheorem{theorem}{Proposition}
\begin{theorem}\label{prop:thresh}
Given $c(\cdot)$ is the threshold-based model of cascading failure as defined in (\ref{eq:threshcasc}), the condition (\ref{eq:val_criterion}) is true if and only if $\phi_v > \frac{|\mathcal{N}(v) \cap \hat{\Omega}|}{|\mathcal{N}(v)|}, \; \forall v \notin \hat{\Omega}$ holds.
\end{theorem}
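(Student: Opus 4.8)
The plan is to prove the equivalence by directly unwinding the definition of the single-round cascade map in (\ref{eq:threshcasc}). Observe that, by definition, $c(\hat{\Omega})$ is precisely the set
\[
S \;=\; \Bigl\{\, v \in V \;:\; v \notin \hat{\Omega},\ \tfrac{|\mathcal{N}(v) \cap \hat{\Omega}|}{|\mathcal{N}(v)|} \geq \phi_v \,\Bigr\},
\]
i.e.\ the collection of nodes not already in $\hat{\Omega}$ that would newly fail in one further round of cascading starting from the set $\hat{\Omega}$. Thus the validation criterion (\ref{eq:val_criterion}), which asks $c(\hat{\Omega}) = \emptyset$, is literally the statement $S = \emptyset$, and the proposition amounts to showing that $S = \emptyset$ is equivalent to the stated per-node strict inequality.

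For the forward direction, I would assume $c(\hat{\Omega}) = S = \emptyset$. Then no node $v \notin \hat{\Omega}$ belongs to $S$, so for every such $v$ the membership condition $\tfrac{|\mathcal{N}(v) \cap \hat{\Omega}|}{|\mathcal{N}(v)|} \geq \phi_v$ must fail. Since this is a comparison of real numbers, the negation of $\geq$ is exactly $<$, giving $\phi_v > \tfrac{|\mathcal{N}(v) \cap \hat{\Omega}|}{|\mathcal{N}(v)|}$ for all $v \notin \hat{\Omega}$, as claimed. For the reverse direction, assume the strict inequality $\phi_v > \tfrac{|\mathcal{N}(v) \cap \hat{\Omega}|}{|\mathcal{N}(v)|}$ holds for every $v \notin \hat{\Omega}$. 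Then no node outside $\hat{\Omega}$ can satisfy the defining condition of $S$, so $S = \emptyset$, i.e.\ $c(\hat{\Omega}) = \emptyset$, which is (\ref{eq:val_criterion}).

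The only point requiring a word of care is the well-definedness of the ratio $\tfrac{|\mathcal{N}(v) \cap \hat{\Omega}|}{|\mathcal{N}(v)|}$: nodes $v$ with $|\mathcal{N}(v)| = 0$ never appear in the image of $c(\cdot)$ and are implicitly excluded from the quantifier in both conditions, so they do not affect the equivalence. I do not expect a substantive obstacle here — the content of the proposition is essentially the contrapositive of the set-builder definition in (\ref{eq:threshcasc}). The value of stating it explicitly is that it reduces the counterfactual-validity check (\ref{eq:val_criterion}) to a single threshold comparison per node over the complement of $\hat{\Omega}$, which is exactly what makes the efficient validation of Section~\ref{sec:cf_gen} possible.
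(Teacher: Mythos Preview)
Your proposal is correct and follows essentially the same approach as the paper: both arguments unfold the definition of $c(\hat{\Omega})$ from (\ref{eq:threshcasc}) and observe that $c(\hat{\Omega})=\emptyset$ is logically equivalent to the negation of the membership condition holding for every $v\notin\hat{\Omega}$, i.e.\ $\phi_v > \tfrac{|\mathcal{N}(v)\cap\hat{\Omega}|}{|\mathcal{N}(v)|}$. Your version is slightly more explicit in separating the two directions and in flagging the degenerate $|\mathcal{N}(v)|=0$ case, but the substance is identical.
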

\begin{proof}
Take $\hat{\Omega}$ to be $\hat{\Omega}_0$. Condition (\ref{eq:val_criterion}) states that $\hat{\Omega}_1 = \emptyset$. According to (\ref{eq:threshcasc}) a node $v \in \hat{\Omega}_1$ if and only if $\phi_v \leq  \frac{|\mathcal{N}(v) \cap \hat{\Omega}_0|}{|\mathcal{N}(v)|}$ and $v \notin \hat{\Omega}_0$. Therefore, by contradiction, $\hat{\Omega}_1 = \emptyset$ if and only if $\phi_v > \frac{|\mathcal{N}(v) \cap \hat{\Omega}_0|}{|\mathcal{N}(v)|}, \; \forall v \notin \hat{\Omega}_0$.
\end{proof}

Proposition \ref{prop:thresh} can be evaluated more efficiently than calculating the factual sample $\hat{\Omega} = C(\hat{\Theta})$, as it it only performs a single round of cascading failure computation, as opposed to the $T$ steps of computation that is required in calculating a factual sample. Intuitively, this counterfactual data allows for many round of cascading computations to be reused in other scenarios where the same cascades would also occur.

\subsubsection{Shortest-Path Cascading Model}
Unlike the threshold-based model, the shortest-path model has no requirements that $\Omega_v, v \in \Theta$ be a connected component. %
Rather, the failure of $v$ will induce some change in the load on all other nodes in the graph, which initiates the cascading failure process. Let us define the vector $\boldsymbol{\gamma}_v \in \mathbb{R}^{N-1}$ as:
\begin{equation}
    \boldsymbol{\gamma}_v = \ell_w(\Omega_v) - \lambda_w, \quad \forall w \in V, w \neq v
\end{equation}

This vector $\boldsymbol{\gamma}_v$ represents the change in load on all other nodes as a result of the failure of $v$. Note that because $l_v$ represents the number of non-trivial shortest paths that pass through $v$, no other node can gain more than $l_v$ in load due to the failure of $v$. Therefore, the load gained by any other node $w$ as a result of the failure of node $v$ can be upper bounded by:
\begin{equation}\label{eq:shortpath_ub}
    \boldsymbol{\gamma}_{v}(w) \leq \sum_{z \in C(v)} l_z, \quad w \neq v
\end{equation}
This inequality represents the worst-case-scenario from the perspective of node $w$, that all shortest-paths that were routed through nodes in $\Omega(v)$ are now routed through $w$ as a result of its failure.

\begin{theorem}\label{prop:shortpath}
Given $c(\cdot)$ is the shortest-path model of cascading failure as defined in (\ref{eq:shortestpath_casc}), the condition (\ref{eq:val_criterion}) is true if for $v \in \Theta_1$ and $w \in \Theta_2$ either of the following two conditions hold:\\
$k_n \geq
\lambda_n + \boldsymbol{\gamma}_v(n) + \sum_{z \in C(w)} l_z, \; \forall n \notin \hat{\Omega}$ \\
$k_n \geq \lambda_n + \boldsymbol{\gamma}_w(n) + \sum_{z \in C(v)} l_z, \; \forall n \notin \hat{\Omega}$
\end{theorem}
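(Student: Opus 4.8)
The plan is to reduce the steady-state requirement (\ref{eq:val_criterion}) to a per-node load inequality and then bound the load on each surviving node by a two-stage ``superposition'' argument resting on the worst-case rerouting estimate (\ref{eq:shortpath_ub}).

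First I would unfold $c(\cdot)$ for the shortest-path model. By (\ref{eq:shortestpath_casc}), a node $n \notin \hat{\Omega}$ belongs to $c(\hat{\Omega})$ exactly when $\ell_n(\hat{\Omega}) > k_n$; hence $c(\hat{\Omega}) = \emptyset$ if and only if $\ell_n(\hat{\Omega}) \le k_n$ for every $n \notin \hat{\Omega}$. So it suffices to show that each of the two displayed conditions forces $\ell_n(\hat{\Omega}) \le k_n$ for all such $n$ (the statement only claims sufficiency, not necessity, unlike Proposition \ref{prop:thresh}).

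Next, fix $n \notin \hat{\Omega}$, and recall $\hat{\Omega} = \Omega_v \cup \Omega_w$; note $n \neq v, w$ since $v, w \in \hat{\Omega}$. I would split $\ell_n(\hat{\Omega}) = \ell_n(\Omega_v) + \bigl(\ell_n(\Omega_v \cup \Omega_w) - \ell_n(\Omega_v)\bigr)$. The first summand equals $\lambda_n + \boldsymbol{\gamma}_v(n)$ by the definition of $\boldsymbol{\gamma}_v$. The second summand is the extra load accumulated at $n$ when the nodes of $\Omega_w$ fail in the graph that already has $\Omega_v$ removed; applying the worst-case reasoning behind (\ref{eq:shortpath_ub}) inside that reduced graph, every shortest path displaced by this second round of failures previously ran through some node of $\Omega_w$, and in the extreme all of them re-route through $n$, so this increment is at most $\sum_{z \in C(w)} l_z$ (summing over $\Omega_w$ only over-counts paths hitting several of its nodes, and harmlessly double-counts any node shared with $\Omega_v$). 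Combining, $\ell_n(\hat{\Omega}) \le \lambda_n + \boldsymbol{\gamma}_v(n) + \sum_{z \in C(w)} l_z \le k_n$ under the first condition. The argument is symmetric in $v$ and $w$: failing $\Omega_w$ first and then $\Omega_v$ gives $\ell_n(\hat{\Omega}) \le \lambda_n + \boldsymbol{\gamma}_w(n) + \sum_{z \in C(v)} l_z \le k_n$ under the second condition. Since either condition alone yields $\ell_n(\hat{\Omega}) \le k_n$ for every $n \notin \hat{\Omega}$, we conclude $c(\hat{\Omega}) = \emptyset$.

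The step I expect to be the main obstacle is justifying that the second-round increment is controlled by the \emph{original} loads $l_z$, $z \in C(w)$, rather than by the (possibly larger) loads those nodes carry after $\Omega_v$ has failed — a priori, removing $\Omega_v$ could re-route traffic onto $\Omega_w$ and inflate that budget. I would handle this by classifying each shortest path through $\Omega_w$ in the $\Omega_v$-removed graph as either an original shortest path of $G$ (already counted in some $l_z$, $z \in C(w)$) or a path displaced out of $\Omega_v$, and then using that $\Omega_v$ and $\Omega_w$ are attributed from trials in different subaction spaces — so the two cascades, and the traffic each displaces, do not overlap — to rule out the second possibility and retain $\sum_{z \in C(w)} l_z$ as a valid bound. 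If one prefers to avoid invoking that separation, the coarser unconditional estimate $\ell_n(\hat{\Omega}) \le \lambda_n + \sum_{z \in \Omega_v} l_z + \sum_{z \in \Omega_w} l_z$ is always available and still gives a usable (if weaker) validity criterion.
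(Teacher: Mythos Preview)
Your argument is essentially the paper's: reduce (\ref{eq:val_criterion}) via (\ref{eq:shortestpath_casc}) to a per-node load inequality, write $\ell_n(\hat{\Omega})$ as $\ell_n(\Omega_v)$ plus the increment from removing $\Omega_w$ in the $\Omega_v$-reduced graph, identify the first piece with $\lambda_n + \boldsymbol{\gamma}_v(n)$, bound the second by $\sum_{z \in C(w)} l_z$ via (\ref{eq:shortpath_ub}), and finish by the $v \leftrightarrow w$ symmetry. The paper simply substitutes (\ref{eq:shortpath_ub}) for that second piece without discussing the load-inflation concern you raise in your final paragraph, so your treatment is more careful than the original rather than a different route.
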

\begin{proof}
Take $\hat{\Omega}$ to be $\hat{\Omega}_0$. Condition (\ref{eq:val_criterion}) states that $\hat{\Omega}_1 = \emptyset$. According to (\ref{eq:shortestpath_casc}), $n$ only belongs to $\hat{\Omega}_1$ if $k_n < \ell_n(\hat{\Omega}_0)$. We have $\ell_n(\hat{\Omega}_0) = \ell_n(\Omega(v)) + \ell_n(\Omega'(w))$, where $\Omega'(W)$ indicates that the cascade is performed on the modified graph with nodes in $\Omega(v)$ already removed. For this reason, while we can take $\ell_n(\Omega(v)) = \boldsymbol{\gamma}_v(n)$, it is not necessarily true that $\ell_n(\Omega'(w)) = \boldsymbol{l}_w(n)$. However, the upper bound in (\ref{eq:shortpath_ub}) can be substituted to create the lower bound $k_n \geq l_n^I + \Delta \boldsymbol{\gamma}_v(n) + \sum_{z \in \Omega(w)} l_z, \; \forall n \notin \hat{\Omega}$. The subscripts $v$ and $w$ can be trivially switched to create the second lower bound.
\end{proof}

The conditions of Proposition \ref{prop:shortpath} can be efficiently evaluated by first calculating $\Delta \mathbf{l}_v, \quad \forall v \in V$, as well as the upper bound (\ref{eq:shortpath_ub}) for all nodes. The number of calculations involved in calculating these quantities in used in Proposition \ref{prop:shortpath} scales linearly with $N$, making it efficient to calculate in an action space that grows exponentially.

\subsection{Payoff Prediction Model} \label{sec:pp_model}

In this section, the proposed method is detailed for using the dataset created as outlined Sections \ref{sec:fac_gen} and \ref{sec:cf_gen} to train a neural network model to predict $\Omega$ for a trial given attacker and defender actions $\alpha_a$ and $\alpha_d$. This prediction model is performing multi-label binary classification using the multi-hot encoding of $\Omega$ as a label. The model aims to predict the likelihood of each node in the network failing for a given trial. The goal of this method is to train this neural network on a small fraction of the exponentially large number of actions, such that it can accurately predict actions that it has not seen before. Then, the neural network's capability for efficient parallel evaluation will allow it to predict outcomes for the rest of the actions for which there is no data. Subsequently, strategies can be generated based on these predictions.

Under the cascading failure models studied in this work, each node has the feature of either a threshold or a capacity that impacts its cascading dynamics. Thus, the node feature matrix that is passed to the model $\mathbf{X}\in\mathbb{R}^{N}$ consists of the thresholds or capacities of each node.

Because binary input features are not conducive to learning for neural networks, the multi-hot encodings are mapped to a continueous embedding space. Each of the ${N\choose2}$ actions is mapped to a unique continuous embedding, and learnable parameters aim to map similar encodings to similar continuous embeddings. Note that this step is a multinode embedding, as each \textit{pair} of nodes is embedded jointly. This is contrary to most GNN node embedding techniques that aggegregate single-node embeddings, and ignore joint features of node pairs. In this context where actions correspond to node pairs, it is critical to capture these joint features, as the cascading effects of a particular node pair cannot always be captured by considering the individual properties of each node alone. This joint embedding of multihot encodings is similar to the \textit{labeling trick} proposed by Zhang et al. \cite{zhang2021labeling} to learn multinode features.

\begin{figure}[h]
    \centering
    \includegraphics[width=1.02\linewidth]{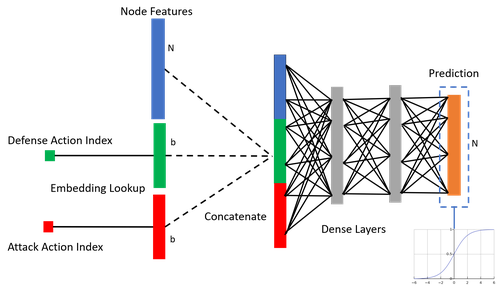}
    \caption{The neural network model used to predict cascading failure payoffs in our proposed method. Takes as input the embedding of the overall graph from the graph feature extractor, as well as the node embeddings of the attacker's and defender's node choices and returns the payoff value as output. Dimensions of each component of the architecture are shown on the figure, where $b$ is the number of features in the embedding and $h$ is the hidden layer size. }
    \label{fig:q_critic}
\end{figure}

The node pair embeddings for both the attack and defense actions are then concatenated with the threshold/capacity node feature vector to create a learned encoding of the feature vector. This encoding is then passed through a series of fully connected layers of fixed size $h$, before the final layer maps it to dimension $N$. A sigmoid activation function is used on the output layers so that all values are in the interval $(0,1)$ and correspond to the predicted probability that each node is part of the cascading failure. Figure \ref{fig:q_critic} shows an illustration of the overall architecture.

\section{Experiments} \label{sec:results}
In this section, extensive experiments are conducted to validate the proposed method and benchmark it against the most comparable approaches. The implementation of the proposed method used in these experiments can be found on the project's GitHub repository.\footnote{https://github.com/jdc5549/cfda-network-envs}

\subsection{Factual Data Generation}
We choose to examine test-cases of graphs with $25,100,$ and $1000$ nodes. The 25-node test-case evaluates the proposed method's ability to approach NE solutions on problems for which the NE can be solved. A security game of this size approaches the upper range for which it is reasonable to directly calculate the NE. The size of the utility matrix for a 25-node graph where two simultaneous targets can be chosen is $|U| = {25\choose2}^2 = 90,000$. The $100$ and $1000$ node test cases evaluate the proposed method in scenarios where directly solving for the NE is not feasible.

For each of these test cases, the subaction sets were chosen to have $5$ targets available to each player, for a total of $100$ possible combinatorial actions considering both players. This choice allows every combinatorial action in each subspace to be fully explored by choosing the number of trials $k=100$. The number of subaction sets $p$ varied for each test-case, and was tuned for best performance.

After the dataset $\mathcal{D}$ is generated, duplicate combinatorial actions are filtered out, resulting in approximately $4\%$ of data being filtered in each test case.

\subsection{Counterfactual Data Generation}

As discussed in Section \ref{sec:cf_gen}, counterfactual data is used to fill in gaps between subaction sets, as not all combinatorial actions are represented in the subaction sets. Thus, a natural question to ask is how the number of subaction sets impacts the amount of counterfactual data that can be generated. Table \ref{tab:subact_vary} shows the number of counterfactuals that are generated for varying amounts of subaction sets in a 25 choose 2 security game with threshold-based cascading. It should be noted that the full payoff matrix for this game contains 24.5 million unique actions. We can see that there is a substantial increase in the proportion of counterfactual to factual data when there are twice as many subsets as nodes, and this ratio stays relatively stable stable as the number of subaction sets increases. Moreover, the efficiency of counterfactual generation is higher than the efficiency of factual data regardless of the number of subaction sets used. For the remaining experiments in this section, the number of subaction sets is chosen to be triple the number of nodes in the network, as this seems to strike the best balance between number of counterfactuals generated and efficiency.

\begin{table}[h]
    \centering
    \begin{tabular}{|c|c|c|}
        \hline
        Num. Sets & Num. Fac. (ms/data) & Num. CFac (ms/data)\\
        \hline
       25 & 2490 (4.02) & 4840 (1.69)\\

       \hline
       50  & 4980 (3.69) & 28,000 (1.86) \\
       \hline
       75 & 7440 (3.80) & 59,200 (1.72) \\
       \hline
       100 &  9880 (3.45) & 78,300 (1.88)\\
       \hline
    \end{tabular}
    \caption{Number of counterfactuals generated and efficiency of generation (in ms/data) for varying number of subaction sets of size 5 in a 100-node network with threshold-based cascading}
    \label{tab:subact_vary}
\end{table}

Figure \ref{fig:cfac_gen} shows the number of counterfactuals that can be generated from a given number of factual samples in a 100-nodes threshold-based cascading environment. The figure shows roughly quadratic growth in the number of generated counterfactuals, which is sometimes referred to as the ``blessing of dimensionality". However, some of the experiments conducted in this work generate a large amount of factual data, such that generating all of the counterfactual data that is possible is both impractical and unnecessary. Thus, going forward, counterfactual data generation is artificially capped at a factor of $10$ times the number of factual data generated.

\begin{figure}
    \centering
    \includegraphics[width=1.04\linewidth]{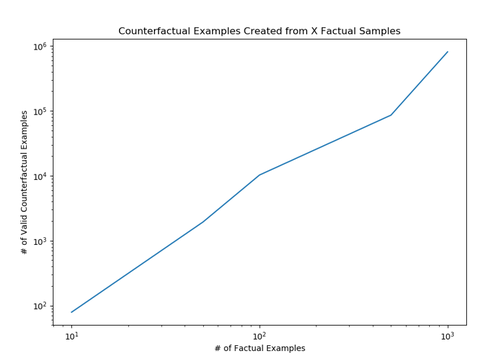}
    \caption{Number of Counterfactual examples that can be generated from a given number of factual examples in a 100 node threshold-based security game environment.}
    \label{fig:cfac_gen}
\end{figure}

Bearing this in mind, Table \ref{tab:num_cfac_data} shows the number of counterfactuals that were generated for each graph size, with $3N$ subaction sets each. We can see that across all of the test cases, the threshold-based cascading counterfactual model generates slightly more data than the shortest-path-based cascading model, however both are able to consistently generate more than the amount of factual data available. Furthermore, in the 100 and 1000 node test cases, the number of counterfactuals generated is very close to the artificial cap of $10$
times the amount of factual data.

\begin{table}[h]
    \centering
    \begin{tabular}{|c|c|c|c|}
    \hline
        Graph Nodes & Casc. Type & Factual Data & Counterfactual Data\\
        \hline
       \multirow{2}{*}{25} & Threshold & $6.75 \times 10^{3}$& $1.15 \times 10^{4}$  \\
        \cline{2-4} & Shortest Path & $6.63 \times 10^{3}$  & $9.76 \times 10^3$\\
        \hline
        \multirow{2}{*}{100} & Threshold & $3.09 \times 10^4$ & $1.60 \times 10^5$ \\
        \cline{2-4} & Shortest Path &  $3.10 \times 10^4 $& $1.23 \times 10^5$\\
        \hline
        \multirow{2}{*}{1000} & Threshold &  $3.19 \times 10^{5}$ & $3.81 \times 10^6$\\
        \cline{2-4} & Shortest Path & $3.19 \times 10^{5}$ & $2.95 \times 10^6$\\
        \hline
    \end{tabular}
    \caption{KL Divergence to the pre-caclulated NE and time elapsed for data creation and training (if applicable) for each method. }
    \label{tab:num_cfac_data}
\end{table}

These results demonstrate that the proposed counterfactual generation methods are successful in generating a vast quantity of data more efficiently than directly acquiring factual data, under multiple models of cascading.

\subsection{Security Game Experiments}

In order to validate the merit of the proposed method in the context of cascading impact defense, we benchmark against SOTA action-restricting and strategy-restricting methods. For action-restricting benchmarks, we choose We et al. \cite{chen2022game} as a method that directly solves for the NE via linear programming and Guo et al. \cite{guo2021reinforcement} as a method that employs tabular Minimax Q-learning to learn the NE strategies. While both of these methods are expected to achieve precise approximations of the NE, they only are scalable to the smallest of scenarios presented in this work, i.e. a 25-node scenario. Therefore, they are not included as benchmarks for larger scenarios. For all test cases, the strategy restriction approach employed by Wang et al. \cite{wang2023attack} is used as a benchmark. This work follows the strategy restriction framework detailed in Section \ref{sec:lit_rev}, where a NE mixed strategies between the selection of nodes uniformly and deterministically according to a heuristic is solved.

We also evaluate ablations of key aspects of the proposed method, namely the partitioning of the action space into subspaces with restricted target sets, as well as CfDA. We evaluate the neural network model without either of these aspects by acquiring training data by randomly sampling the combinatorial action space until $qp$ trials have been collected. We also evaluate the model with action space partitioning while training only on the factual training data, to evaluate the importance of CfDA. We do not evaluate CfDA in the unpartitioned action space, as the motivation for including CfDA is to fill the gaps between subaction spaces.

The 25-node test case evaluates the proposed method in a scenario for which it is feasible to calculate the NE. Table \ref{tab:25node_results} shows the Kullback-Leibler divergence to ground truth NE between the proposed method and the benchmark methods of direct NE computation and Minimax Q-learning, as well as the strategy-restricted method that limits to a mixture of uniform random and deterministic targeted by node degree strategies. Also included are the aforementioned ablations on components of the proposed method. The table also shows the time of computation, which for the neural network methods is based on the time to generate the data and to perform training. The best performing methods for each cascading impact dynamics model in both the scalable and non-scalable categories have their results bolded.

\begin{table*}[h]
    \centering
    \begin{tabular}{|c|c|c|c|c|}
    \hline
        Method & Casc. Type & KL(NE)  & Time (min) & Scalable\\
        \hline
       \multirow{2}{*}{Chen et al. (2022) \cite{chen2022game}} & Threshold & \textbf{0} & 10 &\\
        \cline{2-4} & Shortest Path & \textbf{0} & 8 & No\\
        \hline
        \multirow{2}{*}{Guo et al. (2021) \cite{guo2020enhancing}} & Threshold &$6.09 \times 10^{-3}$ & 42 & \\
        \cline{2-4} & Shortest Path & $4.49 \times 10^{-3}$& 39 & No\\
        \hline
        \multirow{2}{*}{Wang et al. (2023) \cite{wang2023attack}} & Threshold & 3.51 & $\mathbf{0.17}$ &\\
        \cline{2-4} & Shortest Path & 5.41 & $\mathbf{0.18}$ & Yes\\
        \hline
        \multirow{2}{*}{NN} & Threshold & 5.34 & 20 &\\
        \cline{2-4} & Shortest Path & 7.18 & 10 & Yes\\
        \hline
        \multirow{2}{*}{NN+Subact} & Threshold & $\mathbf{1.68}$ & 21 & \\
        \cline{2-4} & Shortest Path & $\mathbf{7.06 \times{10^{-5}}}$ & 18 & Yes\\
        \hline
         \multirow{2}{*}{\textbf{(Ours) NN+Subact+CfDA}} & Threshold & 1.85  &  47 &\\
        \cline{2-4} & Shortest Path & $1.45 \times 10^{-2}$ & 42 & Yes\\
        \hline
    \end{tabular}
    \caption{KL Divergence to the pre-caclulated NE and time elapsed for data creation and training (if applicable) for each method. }
    \label{tab:25node_results}
\end{table*}

Note that each of these benchmarks provides an extreme either in accuracy, as with direct NE computation, or in efficiency, as with the strategy-restricted method. However, we can see that while the strategy-restricted method can provide a strategy in little time, it is very far from the NE. The action-restricting methods are the most appealing for a network for which it is feasible, as expected, but it will cease to be viable to compute for the large network scenarios that are considered in this work.

For all test-cases, including the 100 and 1000 node cases in which action-restricting methods are not feasible, the exploitability and validation set error were also used as evaluation metrics. The validation error metric applies only to the neural network models, as the strategy-restricted methods do not directly make predictions about the failure set $V_F$. The validation error is calculated with respect to a validation dataset of size $10,000$ that has no overlap with the training dataset. This evaluates the generalizability of the neural network approaches in being able to accurately predict the outcome of a cascading failure scenario that it has not seen before. The exploitability of a strategy is defined as the expected average payoff of the best response to that strategy. An exploitability of $2\delta$ yields at least a $\delta$-Nash equilibrium \cite{heinrich2016deep}, which means that the Nash equilibrium has an exploitability of $0$ by definition. We can approximate this best response strategy by training new agents specifically against the fixed ``ego" agent models we wish to evaluate. Because these ``exploiter" agents always play against a fixed ego agent strategy, this is a single-agent game rather than a multi-agent game, and the size of the action space does not grow exponentially. We use Multi-Armed Bandit Reinforcement Learning (MABRL) to train exploiter agents to maximize their payoff in this single-player game.

More precisely, if we have trained ego agents $\pi_{\text{Atk}}$ and $\pi_{\text{Def}}$ that have average payoffs playing against each other of $\delta_{g}$ and $-\delta_{g}$ respectively, we would train exploiter $\pi_{\text{XA}}$ against $\pi_{\text{Atk}}$ and $\pi_{\text{XD}}$ against $\pi_{\text{Def}}$ using MABRL, and they would achieve average payoffs of $\delta_{XA}$ and $\delta_{XD}$ respectively. Our calculated exploitability would then be
\begin{equation}\label{eq:explt}
    \delta = (\delta_{XD} - \delta_g) + (\delta_{XA} + \delta_g)
\end{equation}

\begin{table*}[h]\label{tab:expl_val_results}
    \centering
    \begin{tabular}{c|c|c|c}
       Graph nodes & Method & Exploitability  & Validation Err\\
       \hline\hline
      \multirow{4}{*}{\textbf{25}}  & Wang et al. (2023) & $0.111$ & N/A \\
        \cline{2-4}
        & NN & $0.0102$ & $0.0151$\\
        \cline{2-4}
        & NN+Subact &  $6.92 \times 10^{-3}$  &  $4.68 \times 10^{-3}$\\
        \cline{2-4}
        & \textbf{(Ours) NN+Subact+CfDA} & $\mathbf{5.56 \times 10^{-3}}$ & $\mathbf{4.20 \times 10^{-3}}$\\
        \hline\hline
      \multirow{4}{*}{\textbf{100}}  & Wang et al. (2023) & $0.0715$ & N/A\\
        \cline{2-4}
        & NN & $0.0728$ &  $4.81 \times 10^{-3}$ \\
        \cline{2-4}
        & {NN+Subact} & $0.0675$ & $5.09 \times 10^{-3}$\\
        \cline{2-4}
        & \textbf{(Ours) NN+Subact+CfDA} & $\mathbf{0.0642}$ & $\mathbf{2.24 \times 10^{-3}}$  \\
        \hline \hline
      \multirow{4}{*}{\textbf{1000}}  & Wang et al. (2023) & $5.88 \times {10}^{-3}$ & N/A\\
        \cline{2-4}
        & NN & $6.13 \times 10^{-3}$  & $2.19 \times {10}^{-3}$\\
        \cline{2-4}
        & {NN+Subact} & $6.04 \times 10^{-3}$  &  $2.18 \times 10^{-3}$ \\
        \cline{2-4}
        & \textbf{(Ours) NN+Subact+CfDA} & $\mathbf{5.14 \times 10^{-3}}$ & $\mathbf{1.94 \times 10^{-3}}$ \\
    \end{tabular}
    \caption{Exploitability and validation error for proposed method and ablations as well as SOTA strategy-restricted benchmark.}
    \label{tab:1000node_results}
\end{table*}

Table \ref{tab:1000node_results} shows the exploitability and validation error results for all of the test-cases and benchmark methods. In Table \ref{tab:25node_results}, we can see that the addition of subaction spaces to the neural network model resulted in a drastic improvement in the KL divergence to the NE in 25-node networks, while including counterfactual data slightly increases the divergence to the NE. However, in Table \ref{tab:1000node_results}, we see that the inclusion of counterfactual data does result in modest improvement in terms of validation error and exploitability. This discrepancy is likely due to the NE divergence tending to be a noisy evaluation metric combined with the similar performance in prediction accuracy of these two approaches in the 25-node test case.

As the size of the network increases to $100$ and $1000$ nodes in Table \ref{tab:1000node_results}, we can see that the absolute difference in both exploitability and validation error between of all approaches decreases. This is due to the fact that as the size of the network increases, while the number of targets being selected by each player remains at $2$. Thus, the impact and corresponding payoff for each player is a smaller fraction of the overall network size, which is correlated with both the exploitability and validation error. So while the absolute difference between the proposed method and the others decreases with the size of the network, the relative gap widens in favor of the proposed method as the network size increases. This is the expected result considering the results from Table \ref{tab:num_cfac_data}, as the amount of ``missing" factual data for which the counterfactual data compensates is smallest in the 25-node test case, and grows exponentially with the size of the network. Thus, the $1000$-node test benefits the most from the generation of counterfactual data, and this is reflected in the improved prediction accuracy and exploitability.

Moreover, the SOTA strategy-restricting method is clearly outperformed by all of the neural network approaches in terms of exploitability. While the strategy-restricting approach is computationally efficient, the lack of expressive power in representing player strategies clearly leaves room for flexible neural network models to generate superior strategies.

\section{Conclusion} \label{sec:conclusion}
In conclusion, this paper addresses the challenge of generating preemptive defense strategies against intelligent adversaries in large graph environments. We propose partitioning the combiantorial action space into smaller subaction spaces to create a dataset of cascade outcomes, and augmenting this dataset with counterfactual data. We then train a neural network to predict cascading failure outcomes based on this data, and generalize to any combination of targets chosen by the attacker and defender.

The experimental results demonstrate the efficacy of the proposed method. It achieves significant reductions in exploitability compared to state-of-the-art (SOTA) methods when applied to large graph scenarios. Additionally, the method produces strategies that closely approximate the Nash Equilibrium in smaller-scale scenarios where computation of the equilibrium is feasible. Moreover, the inclusion of subaction space partitioning is shown to provide significant gains in prediction accuracy compared to collecting a similar amount of training data by randomly sampling the full combinatorial action space.

While this work demonstrates the potential of deep learning techniques to generate superior strategies in large combinatorial action spaces, there are many directions for expansion upon this work. In this work, two computationally simple types of cascading failure are investigated and counterfactual data generation algorithms are proposed that are informed by these dynamics. For more computationally complex models of cascading failure with higher fidelity, the benefit of counterfactual data is even greater, but validation algorithms are more challenging to develop. A learning-based solution to counterfactual validation via physics-informed neural networks is an appealing option for a model that is generalizable to any cascading failure dynamics.

Another possible direction of future work is more rigorous process of choosing how to partition subaction spaces. In this work, subaction spaces were sampled randomly creating overlap between combinatorial actions and only probabalistically guaranteeing the representation of every node. Mathematical concepts from combinatorics such as Steiner Systems could be investigated as a way to optimally partition the combinatorial action space.

Additionally, alternative game theoretic formulations of the problem are compatible with this method. While the symmetric zero-sum security game was an appealing choice to demonstrate the potential of this method due to its simplicity, in practice there is often an imbalance in resources between the attacker and defender, making the game asymmetric. Moreover, more complex models of resource allocation for each player could allow varying amounts of resources to be invested in each target, further increasing the complexity of the action space.

\section{Acknowledgements}
The authors would like to thank Rolls Royce for funding this project in part.

\bibliographystyle{IEEEarb}
\bibliography{IEEEabrv,refs}

\end{document}